\newtheorem{theorem}{Theorem}[section]
\newtheorem{lemma}[theorem]{Lemma}
\newtheorem{corollary}[theorem]{Corollary}
\theoremstyle{definition}
\newtheorem{defn}[theorem]{Definition}
\begin{document}
%
\title{Constructing Adjacency Arrays from Incidence Arrays}



\author{Hayden Jananthan$^{1,2}$ ~~ Karia Dibert$^{2,3}$ ~~ Jeremy Kepner$^{2,3,4}$ \\
\\
$^1$Vanderbilt University Mathematics Department, $^2$MIT Lincoln Laboratory Supercomputing Center, \\
$^3$MIT Mathematics Department, $^4$MIT Computer Science \& AI Laboratory}


%


\maketitle

\begin{abstract}

Graph construction, a fundamental operation in a data processing pipeline, is typically done by multiplying the incidence array representations of a graph, $\mathbf{E}_\mathrm{in}$ and $\mathbf{E}_\mathrm{out}$, to produce an adjacency array of the graph, $\mathbf{A}$, that can be processed with a variety of algorithms.  This paper provides the mathematical criteria to determine if the product $\mathbf{A} = \mathbf{E}^{\sf T}_\mathrm{out}\mathbf{E}_\mathrm{in}$ will have the required structure of the adjacency array of the graph.  The values in the resulting adjacency array are determined by the corresponding addition $\oplus$ and multiplication $\otimes$ operations used to perform the array multiplication.  Illustrations of the various results possible from different  $\oplus$ and $\otimes$ operations are provided using a small collection of popular music metadata.

\end{abstract}

\begin{IEEEkeywords}
graph; incidence array; adjacency array; semiring

\end{IEEEkeywords}

%
\IEEEpeerreviewmaketitle

\section{Introduction}
\let\thefootnote\relax\footnotetext{This material is based in part upon work supported by the NSF under grant number DMS-1312831.  Any opinions, findings, and conclusions or recommendations expressed in this material are those of the authors and do not necessarily reflect the views of the National Science Foundation.}

 The duality between the canonical representation of graphs as abstract collections of vertices and edges and a matrix representation has been a part of graph theory since its inception \cite{Konig1931,Konig1936}.  Matrix algebra has been recognized as a useful tool in graph theory for nearly as long \cite{Harary1969,Sabadusi1960,Weischel1962,McAndrew1963, TehYap1964,McAndrew1965,HararyTauth1966,Brualdi1967}.  The modern description of the duality between graph algorithms and matrix mathematics (or sparse linear algebra) has been extensively covered in the recent literature \cite{KepnerGilbert2011} and has further spawned the development of the GraphBLAS math library standard (GraphBLAS.org)\cite{Mattson2013} that has been developed in a series of proceedings \cite{Mattson2014a,Mattson2014b,Mattson2015,Buluc2015,Mattson2016} and implementations \cite{BulucGilbert2011,Kepner2012,Ekanadham2014,Hutchison2015,Anderson2016,Zhang2016}.
  
  Adjacency arrays, typically denoted $\mathbf{A}$, have much in common with adjacency matrices.  Likewise, incidence arrays or edge arrays, typically denoted $\mathbf{E}$, have much in common with incidence matrices \cite{BruckRyser1949,FordFulkerson1962,FulkersonGross1965,FisherWing1965}, edge matrices \cite{DobrjanskyjFreudenstein1967}, adjacency lists \cite{BodinKursh1979}, and adjacency structures \cite{Tarjan1972}.  The powerful link between adjacency arrays and incidence arrays via array multiplication is the focus of the first part of this paper.

Incidence arrays are often readily obtained from raw data. In many cases, an associative array representing a spreadsheet or database table is already in the form of an incidence array.  However, to analyze a graph, it is often convenient to represent the graph as an adjacency array.  
Constructing an adjacency array from data stored in an incidence array via array multiplication is one of the most common and important steps in a data processing system.

Given a graph $G$ with vertex set $K_\mathrm{out}\cup K_\mathrm{in}$ and edge set $K$, the construction of adjacency arrays for $G$ relies on the assumption that $\mathbf{E}^{\sf T}_\mathrm{out}\mathbf{E}_\mathrm{in}$ is an adjacency array of $G$.  This assumption is certainly true in the most common case where the value set is composed of non-negative reals and the operations $\oplus$ and $\otimes$ are arithmetic plus ($+$) and arithmetic times (${\times}$) respectively. However, one hallmark of associative arrays is their ability to contain as values nontraditional data. For these value sets, $\oplus$ and $\otimes$ may be redefined to operate on non-numerical values.  For example, for the value of all alphanumeric strings, with
\begin{eqnarray*}
  \oplus &=& \max() \\
  \otimes &=& \min()
\end{eqnarray*}
it is not immediately apparent in this case whether $\mathbf{E}^{\sf T}_\mathrm{out}\mathbf{E}_\mathrm{in}$ is an adjacency array of the graph whose set of vertices is $K_\mathrm{out} \cup K_\mathrm{in}$. In the subsequent sections,  the criteria on the value set $V$ and the operations $\oplus$ and $\otimes$ are presented so that
$$
  \mathbf{A} = \mathbf{E}^{\sf T}_\mathrm{out}\mathbf{E}_\mathrm{in}
$$
always produces an adjacency array \cite{Dibert2015}.

\subsection{Definitions}

For a directed graph (from here onwards, just `graph') $G$, $K_\mathrm{out}$ will denote the set of vertices which are the sources of edges, $K_\mathrm{in}$ will denote the set of vertices which are the targets of edges, and $K$ will denote the set of edges.  The vertex set of $G$ will be assumed to be $K_\mathrm{out} \cup K_\mathrm{in}$. $K_\mathrm{out}$, $K_\mathrm{in}$, and $K$ are assumed to be finite and totally-ordered.

$V$ will denote the set of values that the data can take on, such as non-negative real numbers or the elements of an ordered set.  $\oplus$ and $\otimes$ are binary operations on $V$ (in particular, $V$ is closed under the operations $\oplus$ and $\otimes$), such as $\oplus = +$ and $\otimes = \times$ or $\oplus = \max$ and $\otimes = +$.  $\oplus$ and $\otimes$ each have identity elements $0$ and $1$, respectively, i.e.
\begin{align*}
v \oplus 0 & = 0\oplus v = v \\
v \otimes 1 & = 1\otimes v = v
\end{align*}
for all $v\in V$.  

For the purposes of understanding what algebraic properties are required for $\mathbf{E}^{\sf T}_\mathrm{out}\mathbf{E}_\mathrm{in}$ to be an adjacency array of a graph, $\oplus$ and $\otimes$ will not be assumed to be associative or commutative, and $\otimes$ does not necessarily distribute over $\oplus$, nor is $0$ assumed to be an annihilator of $\otimes$.

\begin{defn}[Associative Array]
An \emph{associative array} is a map $\mathbf{A}: K_1{\times} K_2 \to V$, where $K_1$ and $K_2$ are finite totally-ordered sets, referred to as \emph{key sets} and whose elements are called \emph{keys}, and $V$ is the value set.
\end{defn}

\begin{defn}[Transpose]
If $\mathbf{A}: K_1{\times}K_2 \to V$ is an associative array, then $\mathbf{A}^{\sf T}: K_2{\times} K_1 \to V$ is the associative array defined as
\[
\mathbf{A}^{\sf T}(k_2,k_1) = \mathbf{A}(k_1,k_2)
\]
where $k_1\in K_1$ and $k_2\in K_2$.
\end{defn}

\index{array!multiplication}
\begin{defn}[Array Multiplication]
Multiplication of associative arrays is defined as
$$
  \mathbf{C} = \mathbf{A} {\oplus}.{\otimes} \mathbf{B} = \mathbf{AB}
$$
or more specifically 
$$
  \mathbf{C}(k_1,k_2) =
    \bigoplus\limits_{k_3} \mathbf{A}(k_1,k_3) \otimes \mathbf{B}(k_3,k_2)
$$
where $\mathbf{A}$, $\mathbf{B}$, and $\mathbf{C}$ are associative arrays
\begin{eqnarray*}
  \mathbf{A} : K_1 \times K_3 \rightarrow V \\
  \mathbf{B} : K_3 \times K_1 \rightarrow V \\
  \mathbf{C} : K_1 \times K_2 \rightarrow V
\end{eqnarray*}
and $k_1 \in K_1$, $k_2 \in K_2$, $k_3 \in K_3$.
\end{defn}

\begin{defn}[Incidence Arrays]
If $G$ is a graph with vertex set $K_\mathrm{out} \cup K_\mathrm{in}$ and edge set $K$, then
\begin{description}
\item[$\mathbf{E}_\mathrm{out}$]: $K\times K_\mathrm{out} \to V$ is a \emph{source incidence array} if $\mathbf{E}_\mathrm{out}(k,a) \neq 0$ if and only if the edge $k\in K$ is directed outward from the vertex $a\in K_\mathrm{out}$
\item[$\mathbf{E}_\mathrm{in}$]: $K\times K_\mathrm{in} \to V$ is a \emph{target incidence array} if $\mathbf{E}_\mathrm{in}(k,a) \neq 0$ if and only if the edge $k\in K$ is directed into the vertex $a\in K_\mathrm{in}$.
\end{description}
\end{defn}

\begin{defn}[Adjacency Array]
If $G$ is a graph with vertex set $K_\mathrm{out} \cup K_\mathrm{in}$ and edge set $K$, then $\mathbf{A}: K_\mathrm{out}{\times} K_\mathrm{in} \to V$ is a \emph{adjacency array} if $\mathbf{A}(a,b) \neq 0$ if and only if there is an edge with source $a$ and target $b$.
\end{defn}

\section{Adjacency Array Construction}

If $\mathbf{A}$ is an adjacency array for a graph $G=(K_\mathrm{out}\cup K_\mathrm{in},K)$, then $\mathbf{A}(a,b)\neq 0$ if and only if there is an edge $k$ with source $a$ and target $b$, i.e. so that $\mathbf{E}_\mathrm{out}(k,a) \neq 0$ and $\mathbf{E}_\mathrm{in}(k,a)\neq 0$.  In the case where the product of two non-zero values is non-zero, this can be subsumed to say that $\mathbf{A}(a,b) \neq 0$ if and only if $\mathbf{E}_\mathrm{out}(k,a)\mathbf{E}_\mathrm{in}(k,a)$.  Writing this as
\[		\mathbf{E}_\mathrm{out}(k,a) \mathbf{E}_\mathrm{in}(k,a) = \mathbf{E}^{\sf T}_\mathrm{out}(a,k) \mathbf{E}_\mathrm{in}(k,a) \]
This latter expression looks like a term in the evaluation
\[ (\mathbf{E}_\mathrm{out}^{\sf T}\mathbf{E}_\mathrm{in})(a,b) = \bigoplus_{k\in K}{\mathbf{E}^{\sf T}_\mathrm{out}(a,k) \mathbf{E}_\mathrm{in}(k,b)} \]
but the introduction of more terms means that more assumptions need to be made about the relationships between $\oplus,\otimes$, and $0$.


\begin{theorem} \label{thm:graph construction}
Let $V$ be a set with closed binary operations $\oplus,\otimes$ with identities $0,1\in V$.  Then the following are equivalent:
\begin{enumerate}
\item $\oplus$ and $\otimes$ satisfy the properties
	\begin{enumerate}[(a)]
		\item Zero-Sum-Free: $a\oplus b=0$ if and only if $a=b=0$,
		\item No Zero Divisors: $a\otimes b = 0$ if and only if $a=0$ or $b=0$, and 
		\item $0$ is Annihilator for $\otimes$: $a\otimes 0 = 0\otimes a=0$.
	\end{enumerate}
\item If $G$ is a graph with out-vertex and in-vertex incidence arrays $\mathbf{E}_\mathrm{out}:K{\times} K_\mathrm{out} \rightarrow V$ and $\mathbf{E}_\mathrm{in}: K{\times} K_\mathrm{out} \rightarrow V$, then $\mathbf{E}_\mathrm{out}^{\sf T}\mathbf{E}_\mathrm{in}$ is an adjacency array for $G$.
\end{enumerate}
\end{theorem}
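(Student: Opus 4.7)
The plan is to prove the biconditional by two separate implications, both resting on the expansion
\[ (\mathbf{E}_\mathrm{out}^{\sf T}\mathbf{E}_\mathrm{in})(a,b) = \bigoplus_{k \in K} \mathbf{E}_\mathrm{out}(k,a) \otimes \mathbf{E}_\mathrm{in}(k,b), \]
which follows immediately from the definitions of transpose and array multiplication.

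For $(1) \Rightarrow (2)$, I would verify directly that the left-hand side is nonzero if and only if some edge $k$ has source $a$ and target $b$. Property (a) (Zero-Sum-Free) reduces the non-vanishing of the sum to the non-vanishing of at least one summand; property (b) (No Zero Divisors) reduces the non-vanishing of a summand $\mathbf{E}_\mathrm{out}(k,a) \otimes \mathbf{E}_\mathrm{in}(k,b)$ to the non-vanishing of both factors, which by the definition of incidence array is exactly the statement that $k$ has source $a$ and target $b$; and property (c) (annihilator) is invoked to evaluate the summands for which at least one factor vanishes, so that they contribute $0$ rather than contaminate the sum.

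For $(2) \Rightarrow (1)$, I would prove each of (a), (b), (c) by contrapositive, using a small witness graph in each case. For (a), given $v_1, v_2 \neq 0$ with $v_1 \oplus v_2 = 0$, build a graph with two parallel edges $k_1, k_2$ from a single vertex $u$ to a single vertex $w$, set $\mathbf{E}_\mathrm{out}(k_i, u) = 1$ and $\mathbf{E}_\mathrm{in}(k_i, w) = v_i$, and observe that the product entry at $(u,w)$ is $v_1 \oplus v_2 = 0$, contradicting that $(u,w)$ is an edge. For the nontrivial direction of (b), a graph with a single edge $k$ from $u$ to $w$ and values $\mathbf{E}_\mathrm{out}(k,u) = a$, $\mathbf{E}_\mathrm{in}(k,w) = b$ forces $a \otimes b \neq 0$ whenever $a, b \neq 0$. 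For (c) (which also supplies the other direction of (b)), place an edge $k$ through auxiliary vertices so that the single product term appearing at a non-edge pair $(u,w)$ has the shape $a \otimes 0$, $0 \otimes a$, or $0 \otimes 0$; since $(u,w)$ is not an edge, the adjacency hypothesis forces that term to be $0$.

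The main obstacle is that an incidence array is defined to be nonzero \emph{exactly} at the incident positions, so one cannot simply plant a $0$ at an edge-incident entry to manufacture a $0 \otimes a$ term. The workaround, used in the witness graphs for (c), is to route the relevant edge through auxiliary vertices distinct from the tested pair, letting the $0$ entries arise organically from the absence of incidence (including the case $a=0$, where no incidence can witness $a$ at all). The remainder of the argument is bookkeeping: tracking which algebraic property underwrites each equivalence in the forward direction, and choosing the right minimal graph to isolate each term in the reverse direction.
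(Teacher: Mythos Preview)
Your plan is essentially the paper's own proof: the forward implication is argued directly from (a)--(c) exactly as you outline, and the reverse implication is obtained by building small witness graphs---two parallel edges for (a), a single edge for (b), and a configuration with extra vertices for (c), with (c) established last.

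One point needs adjustment. In your treatment of (c) you speak of ``the single product term appearing at a non-edge pair $(u,w)$,'' but with only one edge you cannot have $u\in K_\mathrm{out}$, $w\in K_\mathrm{in}$, and yet no edge $u\to w$: the unique edge would have to start at $u$ (to put $u$ in $K_\mathrm{out}$) and end at $w$ (to put $w$ in $K_\mathrm{in}$). Any construction that places $u$ and $w$ in the correct key sets without joining them requires at least two edges, so the $(u,w)$ entry will be a $\oplus$-sum of at least two terms of the form $a\otimes 0$ and $0\otimes b$. The paper handles this by first proving zero-sum-freeness (your (a)) and then invoking it to conclude that each summand is $0$ individually; you should plan to do the same rather than expect a single isolated term.
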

\begin{proof}
Let $\mathbf{A} = \mathbf{E}_\mathrm{out}^{\sf T}\mathbf{E}_\mathrm{in}$.  

As above, for $\mathbf{A}$ to be the adjacency array of $G$, the entry $\mathbf{A}(k_\mathrm{out},k_\mathrm{in})$ must be nonzero if and only if there is an edge from $k_\mathrm{out}$ to $k_\mathrm{in}$, which is equivalent to saying that the entry must be nonzero if and only if there is a $k \in K$ such that
\begin{eqnarray*}
  \mathbf{E}^{\sf T}_\mathrm{out}(k_\mathrm{out},k) \neq 0 \\
  \mathbf{E}_\mathrm{in}(k,k_\mathrm{in}) \neq 0
\end{eqnarray*}
Taken altogether, the above pair of equations imply
\begin{multline*} 
\bigoplus_{k \in K} \mathbf{E}^{\sf T}_\mathrm{out}(k_\mathrm{out},k) \otimes \mathbf{E}_\mathrm{in}(k,k_\mathrm{in})  \neq 0 \\
\iff
\exists k\in K \text{ so that } \mathbf{E}^{\sf T}_\mathrm{out}(k_\mathrm{out},k) \neq 0  ~~ \text{and} ~~ \mathbf{E}_\mathrm{in}(k,k_\mathrm{in}) \neq 0 \label{problem}
\end{multline*}

First, the above condition can be restated in a form that more easily provides the zero-sum-freeness of $\oplus$, lack of zero-divisors for $\otimes$, and the fact that $0$ annihilates under $\otimes$.
Equation~\ref{problem} is equivalent to 
\begin{multline}
  \bigoplus\limits_{k \in K} \mathbf{E}_\mathrm{out}(k,x) \otimes \mathbf{E}_\mathrm{in}(k,y) = 0 \iff \\ 
\nexists k \in K \, \mbox{so that} \, \mathbf{E}_\mathrm{out}(k,x) \neq 0 \text{ and } \mathbf{E}_\mathrm{in}(k,y) \neq 0
\end{multline}
which in turn is equivalent to
\begin{multline}
\bigoplus\limits_{k \in K} \mathbf{E}_\mathrm{out}(k,x) \otimes \mathbf{E}_\mathrm{in}(k,y) = 0 \iff \\
\forall k \in K, \mathbf{E}_\mathrm{out}(k,x) = 0 \text{ or } \mathbf{E}_\mathrm{in}(k,y) = 0
\end{multline}
This expression may be split up into two conditional statements
\begin{multline} 
\bigoplus\limits_{k \in K} \mathbf{E}_\mathrm{out}(k,x) \otimes \mathbf{E}_\mathrm{in}(k,y) = 0 \Rightarrow \\
 \forall k \in K, \mathbf{E}_\mathrm{out}(k,x) = 0 \text{ or }  \mathbf{E}_\mathrm{in}(k,y) = 0 \label{leftright}
 \end{multline}
and
\begin{multline} 
\forall k \in K, \mathbf{E}_\mathrm{out}(k,x) = 0 \, \mbox{or} \, \mathbf{E}_\mathrm{in}(k,y) = 0 \Rightarrow \\
 \bigoplus\limits_{k \in K} \mathbf{E}_\mathrm{out}(k,x) \otimes \mathbf{E}_\mathrm{in}(k,y) = 0 \label{rightleft}
 \end{multline}
\begin{lemma} \label{zero-sum-free is necessary}
Equation~\ref{leftright} implies that $V$ is zero-sum-free.
\end{lemma}
\begin{proof}
Suppose there exist nonzero $v,w \in V$ such that $v \oplus w = 0$, or that nontrivial additive inverses exist.  Then it is possible to choose a graph $G$ to have edge set $\{k_1,k_2\}$ and vertex set $\{a,b\}$, where both $k_1,k_2$ start from $a$ and end at $b$.  Then defining 
\begin{eqnarray*}
  \mathbf{E}_\mathrm{out}(k_1,a)=v \\
  \mathbf{E}_\mathrm{out}(k_2,a)=w \\
  \mathbf{E}_\mathrm{in}(k_i,b)=1
\end{eqnarray*}
provides proper out-vertex and in-vertex incidence arrays for $G$.  Moreover, it is the case that
$$
\mathbf{E}^{\sf T}_\mathrm{out}\mathbf{E}_\mathrm{in}(b,a)=(v\otimes 1)\oplus (w\otimes 1)=v\oplus w=0
$$
which contradicts Equation~\ref{leftright}.  Therefore, no such nonzero $v$ and $w$ may be present in $V$, meaning it is necessary that $V$ be zero-sum-free.
\end{proof}

\begin{lemma} \label{no zero-divisors is necessary}
Equation~\ref{leftright} implies that $V$ has no zero-divisors.
\end{lemma}
\begin{proof}
Suppose $v\otimes w = 0$. Define the graph $G$ to have edge set $\{k\}$ and vertex set $\{a\}$ with a single self-loop given by $k$.  Then define
\begin{eqnarray*}
  \mathbf{E}_\mathrm{out}(k,a)=v \\
   \mathbf{E}_\mathrm{in}(k,a)=w
\end{eqnarray*}
to obtain out-vertex and in-vertex incidence arrays for $G$.  Then
$$
\mathbf{E}^{\sf T}_\mathrm{out}\mathbf{E}_\mathrm{in}(a,a)=\mathbf{E}_\mathrm{out}(k,a)\otimes \mathbf{E}_\mathrm{in}(k,a)=v\otimes w= 0
$$
Thus, Equation~\ref{leftright} implies that $v=w=0$, and hence $V$ has no zero-divisors.
\end{proof}

\begin{lemma} \label{annihilating is necessary}
Equation~\ref{leftright} implies that $0$ annihilates $V$ under $\otimes$.
\end{lemma}
\begin{proof}
Suppose $v\in V$. Define the graph $G$ to have edge set $\{k_1,k_2\}$ and vertex set $\{a,b\}$, with self-loops at $a$ and $b$ given by $k_1$ and $k_2$, respectively.  Define
$$
  \mathbf{E}_\mathrm{out}(k_1,a)=v=\mathbf{E}_\mathrm{in}(k_1,a)
$$
and
$$
  \mathbf{E}_\mathrm{out}(k_2,b) = v = \mathbf{E}_\mathrm{in}(k_2,b)
$$
(and all other entries in $\mathbf{E}_\mathrm{out}$ and $\mathbf{E}_\mathrm{in}$ equal to $0$) 
results in out-vertex and in-vertex incidence arrays of $G$.  Moreover, it is true that
\begin{eqnarray*}
0 &=& \mathbf{E}^{\sf T}_\mathrm{out}\mathbf{E}_\mathrm{in}(a,b) \\
  &=& \mathbf{E}_\mathrm{out}(k_1,a)\otimes \mathbf{E}_\mathrm{in}(k_1,b)\oplus \mathbf{E}_\mathrm{out}(k_2,a) \otimes \mathbf{E}_\mathrm{in}(k_2,b) \\
  &=&(v\otimes 0) \oplus (0\otimes v)
\end{eqnarray*}
By Lemma~\ref{zero-sum-free is necessary}, $V$ is zero-sum-free so it follows that $v\otimes 0 = 0\otimes v = 0$.  Thus, $0$ is an annihilator for $\otimes$.
\end{proof}


Now Theorem~\ref{thm:graph construction}(i) is shown to be sufficient for Theorem~\ref{thm:graph construction}(ii) to hold. Assume that zero is an annihilator, $V$ is zero-sum-free, and $V$ has no zero-divisors.  Zero-sum-freeness and the nonexistence of zero divisors give
\begin{multline}
\exists k \in K \text{ so that } \mathbf{E}_\mathrm{out}(k,x) \neq 0 \text{ and } \mathbf{E}_\mathrm{in}(k,y) \neq 0 \Rightarrow \\
\bigoplus\limits_{k \in K} \mathbf{E}_\mathrm{out}(k,x) \otimes \mathbf{E}_\mathrm{in}(k,y) \neq 0 
\end{multline}
which is the contrapositive of Equation~\ref{leftright}. And, that zero is an annihilator gives
\begin{multline}
\forall k \in K, \mathbf{E}_\mathrm{out}(e,x) = 0 \text{ or } \mathbf{E}_\mathrm{in}(e,y) = 0 \Rightarrow \\
\bigoplus\limits_{k \in } \mathbf{E}_\mathrm{out}(k,x) \otimes \mathbf{E}_\mathrm{in}(k,y) = 0 
\end{multline}
which is (\ref{rightleft}). 
As Equation~\ref{leftright} and Equation~\ref{rightleft} combine to form Equation~\ref{problem}, it is established that the conditions are sufficient for Equation~\ref{problem}.
\end{proof}

\section{Adjacency Array of Reverse Graph}


The remaining product of the incidence arrays that is defined is $\mathbf{E}^{\sf T}_\mathrm{in}\mathbf{E}_\mathrm{out}$. The above requirements will now be shown to be necessary and sufficient for the remaining product to be the adjacency array of the reverse of the graph. Recall that the reverse of $G$ is the graph $\bar{G}$ in which all the arrows in $G$ have been reversed. Let $G$ be a graph with incidence matrices $\mathbf{E}_\mathrm{out}$ and $\mathbf{E}_\mathrm{in}$.
\begin{corollary}
Condition (i) in Theorem~\ref{thm:graph construction} are necessary and sufficient so that $\mathbf{E}^{\sf T}_\mathrm{in}\mathbf{E}_\mathrm{out}$ is an adjacency matrix of the reverse of $G$.
\label{revcor}
\end{corollary}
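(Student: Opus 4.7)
The plan is to reduce the corollary to a direct application of Theorem~\ref{thm:graph construction} to the reverse graph $\bar{G}$ rather than to $G$. The key structural observation is that reversing $G$ simply swaps the roles of sources and targets of every edge, so the target incidence array $\mathbf{E}_\mathrm{in}$ of $G$ qualifies as a source incidence array of $\bar{G}$, and the source incidence array $\mathbf{E}_\mathrm{out}$ of $G$ qualifies as a target incidence array of $\bar{G}$. This identification is immediate from the definitions of $\mathbf{E}_\mathrm{out}$ and $\mathbf{E}_\mathrm{in}$, since an edge $k\in K$ emanates from $b$ in $\bar{G}$ exactly when it targets $b$ in $G$, and vice versa; the underlying edge set $K$ is unchanged.

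With that identification in hand, I would invoke Theorem~\ref{thm:graph construction} with $\bar{G}$ in place of $G$, taking $\mathbf{E}_\mathrm{in}$ to play the role of the out-incidence array and $\mathbf{E}_\mathrm{out}$ the role of the in-incidence array. The theorem then states the equivalence of condition (i) with the assertion that the transpose of the source incidence array composed with the target incidence array yields an adjacency array of $\bar{G}$. For $\bar{G}$ with the above identifications, this product is exactly $\mathbf{E}_\mathrm{in}^{\sf T}\mathbf{E}_\mathrm{out}$, which is what the corollary asks about.

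The sufficiency direction thus reduces to a single line: if condition (i) holds, Theorem~\ref{thm:graph construction} applied to $\bar{G}$ immediately produces that $\mathbf{E}_\mathrm{in}^{\sf T}\mathbf{E}_\mathrm{out}$ is an adjacency array of $\bar{G}$. For necessity, one can either cite the necessity direction of Theorem~\ref{thm:graph construction} applied to $\bar{G}$ directly, or observe that the counterexample graphs constructed in Lemmas~\ref{zero-sum-free is necessary}, \ref{no zero-divisors is necessary}, and \ref{annihilating is necessary} are either self-loops or pairs of parallel edges between two vertices, and each is isomorphic to its own reverse; so the same counterexamples show that any failure of condition (i) forces $\mathbf{E}_\mathrm{in}^{\sf T}\mathbf{E}_\mathrm{out}$ to fail to be an adjacency array of the reverse as well.

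There is no substantive obstacle here; the only thing to state carefully is the bookkeeping that reversal swaps source and target incidence arrays, after which the corollary is a formal consequence of Theorem~\ref{thm:graph construction} combined with the fact that reversal is an involution on the class of finite directed graphs.
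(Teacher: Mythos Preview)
Your proposal is correct and follows essentially the same approach as the paper: identify $\mathbf{E}_\mathrm{in}$ and $\mathbf{E}_\mathrm{out}$ as valid source and target incidence arrays for $\bar{G}$, then invoke Theorem~\ref{thm:graph construction} applied to $\bar{G}$. Your treatment of the necessity direction is in fact more careful than the paper's, which leaves that direction largely implicit.
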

\begin{proof}
Let $\bar{G}$ denote the reverse of $G$, and let $\bar{\mathbf{E}}_\mathrm{out}$ and $\bar{\mathbf{E}}_\mathrm{in}$ be out-vertex and in-vertex incidence arrays for $\bar{G}$, respectively.  Recall that $\bar{G}$ is defined to have the same edge and vertex sets as $G$ but changes the directions of the edges, in other words, if an edge $k$ leaves a vertex $a$ in $G$, then it enters $a$ in $\bar{G}$, and vice versa.  As such, $\mathbf{E}_\mathrm{out}(k,a) \neq 0$ if and only if $\bar{\mathbf{E}}_\mathrm{in}(k,a) \neq 0$, and likewise $\mathbf{E}_\mathrm{in}(k,a)\neq 0$ if and only if $\bar{\mathbf{E}}_\mathrm{out}(k,a) \neq 0$.  As such, choosing $\mathbf{E}_\mathrm{out}=\bar{\mathbf{E}}_\mathrm{in}$ and $\mathbf{E}_\mathrm{in}=\bar{\mathbf{E}}_\mathrm{out}$ gives valid in-vertex and out-vertex incidence matrices for $\bar{G}$, respectively.    Then by Theorem~\ref{thm:graph construction} it can be shown that
$$
  \bar{\mathbf{E}}^{\sf T}_\mathrm{out}\bar{\mathbf{E}}_\mathrm{in}=\mathbf{E}^{\sf T}_\mathrm{in}\mathbf{E}_\mathrm{out}
$$
\end{proof}

It is now straightforward to identify algebraic structures that comply with the established criteria. Notably, all zero-sum-free semirings with no zero-divisors comply, such as $\mathbb{N}$ or $\mathbb{R}_{\geq 0}$ with the standard addition and multiplication.  In addition, any linearly ordered set with $\oplus$ and $\otimes$ given by $\max$ and $\min$, respectively.  Some non-examples, however, include the max-plus algebra or non-trivial Boolean algebras, which do not satisfy the zero-product property, or rings, which except for the zero ring are not zero-sum-free.  Furthermore, the value sets of associative arrays need not be defined exclusively as semirings, as several semiring-like structures satisfy the criteria. These structures may lack the properties of additive or multiplicative commutativity, additive or multiplicative associativity, or distributivity of multiplication over addition, which are not necessary to ensure that the product of incidence arrays yields an adjacency array.

The criteria guarantee an accurate adjacency array for any dataset that satisfies them, regardless of value distribution in the incidence arrays. However, if the incidence arrays are known to possess a certain structure, it is possible to circumvent some of the conditions and still always produce adjacency arrays.  For example, if each key set of an undirected incidence array $\mathbf{E}$ is a list of documents and the array entries are sets of words shared by documents, then it is necessary that a word in $\mathbf{E}(i,j)$ and $\mathbf{E}(m,n)$ has to be in $\mathbf{E}(i,n)$ and $\mathbf{E}(m,j)$. This structure means that when multiplying $\mathbf{E}^{\sf T}\mathbf{E}$ using $\oplus = \cup$ and $\otimes = \cap$, a nonempty set will never be ``multiplied'' by (intersected with) a disjoint nonempty set. This eliminates the need for the zero-product property to be satisfied, as every multiplication of nonempty sets is already guaranteed to produce a nonempty set. The array produced will contain as entries a list of words shared by those two documents.

Though the criteria ensure that the product of incidence arrays will be an adjacency array, they do not ensure that certain matrix properties hold. For example, the property $(\mathbf{AB})^{\sf T}=\mathbf{B}^{\sf T}\mathbf{A}^{\sf T}$ may be violated under these criteria, as $(\mathbf{E}^{\sf T}_\mathrm{out}\mathbf{E}^{\sf T}_\mathrm{in})$ is not necessarily equal to $\mathbf{E}^{\sf T}_\mathrm{in}\mathbf{E}_\mathrm{out}$. (For this matrix transpose property to always hold, the operation $\otimes$ would have to be commutative.)

\section{Graph Construction with Different Semirings}

   The ability to change $\oplus$ and $\otimes$ operations allows different graph adjacency arrays to be constructed using the same element-wise addition, element-wise multiplication, and array multiplication syntax.  Specific pairs of operations are best suited for constructing certain types of adjacency arrays.  The  pattern of edges resulting from array multiplication of incidence arrays is generally preserved for various semirings.  However, the non-zero values assigned to the edges can be very different and enable the construction different graphs.

For example, constructing an adjacency array of the graph of music writers connected to music genres from Figure~\ref{fig:D4M-SparseAssocArray} begins with selecting the incidence sub-arrays $\mathbf{E}_1$ and $\mathbf{E}_2$ as shown in Figure~\ref{fig:Music-IncidenceArrays}.  Array multiplication of $\mathbf{E}_1^{\sf T}$ with $\mathbf{E}_2$ produces the desired adjacency array of the graph.   Figure~\ref{fig:Semiring-SparseCorrelation} illustrates this array multiplication for different operator pairs $\oplus$ and $\otimes$.

\begin{figure}[htb]
  \centering
    \includegraphics[width=.5\textwidth]{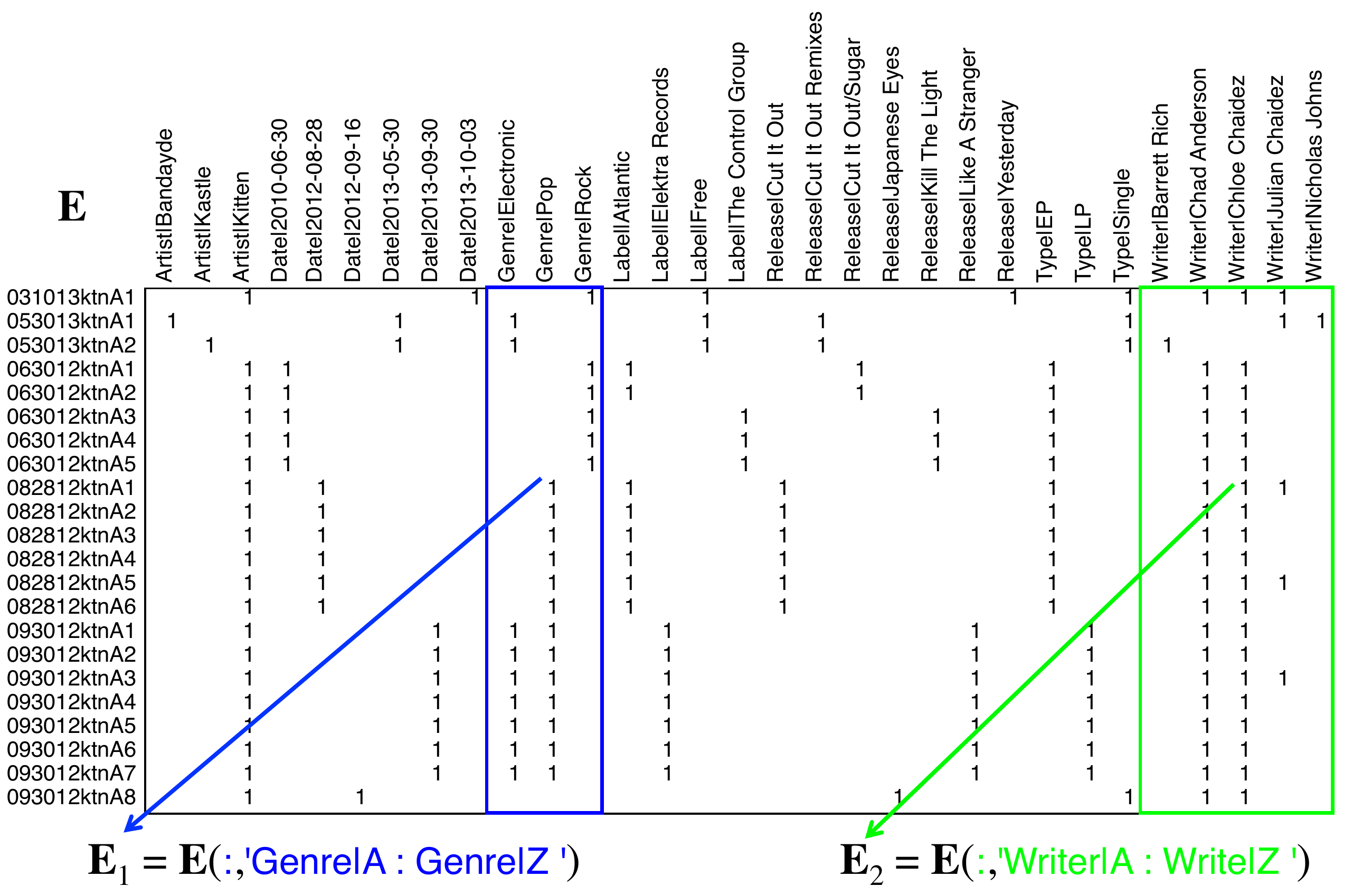}
     \caption{D4M sparse associative array $\mathbf{E}$ representation of a table of data from a music database.  The column key and the value are concatenated with a separator symbol (in this case $|$) resulting in every unique pair of column and value having its own column in the sparse view.  The new value is usually 1 to denote the existence of an entry.  Column keys are an ordered set of database fields.  Sub-arrays  $\mathbf{E}_1$ and $\mathbf{E}_2$ are selected with Matlab-style notation to denote all of the row keys and ranges of column keys.}
     \label{fig:D4M-SparseAssocArray}
\end{figure}

\begin{figure}[htb]
  \centering
    \includegraphics[width=3in]{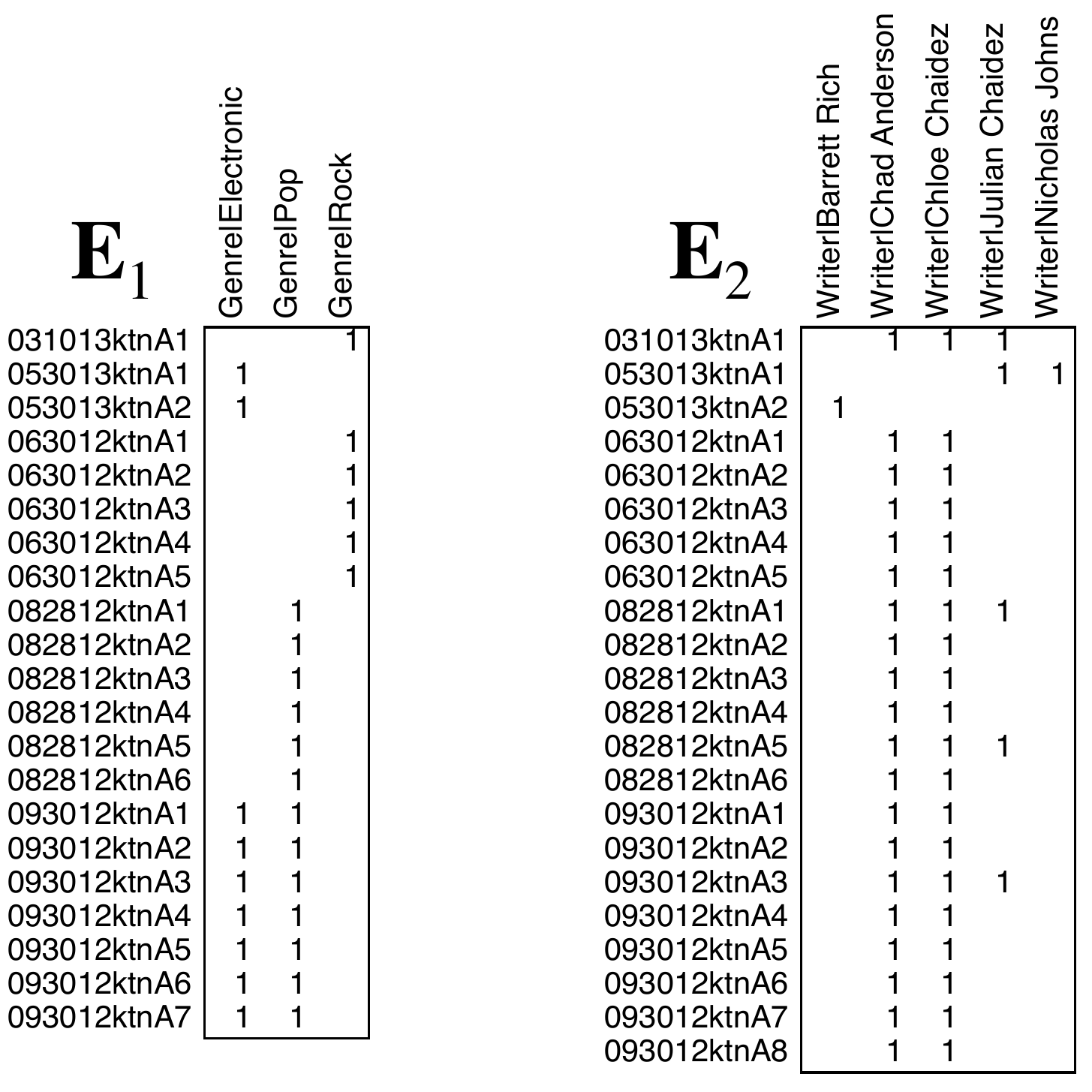}
      \caption{Incidence arrays of music writers and music genres $\mathbf{E}_1$ and $\mathbf{E}_2$ as defined in Figure~\ref{fig:D4M-SparseAssocArray} for different tracks of music.}
      \label{fig:Music-IncidenceArrays}
\end{figure}

The pattern of edges among vertices in the adjacency arrays shown Figure~\ref{fig:Semiring-SparseCorrelation} are the same for the different operator pairs, but the edge weights differ.  All the non-zero values in $\mathbf{E}_1$ and $\mathbf{E}_2$ are 1. All the $\otimes$ operators in Figure~\ref{fig:Semiring-SparseCorrelation} have the property
$$
   0 \otimes 1 = 1 \otimes 0 = 0
$$
for their respective values of zero be it 0, $\text{-}\infty$, or $\infty$.  Likewise, all the $\otimes$ operators in Figure~\ref{fig:Semiring-SparseCorrelation} also have the property
$$
   1 \otimes 1 = 1
$$
except where $\otimes = +$, in which case
$$
   1 \otimes 1 = 2
$$
The differences in the adjacency array weights are less pronounced then if the values of $\mathbf{E}_1$ and $\mathbf{E}_2$ were more diverse.  The most apparent difference is between the ${+}.{\times}$ semiring and the other semirings in Figure~\ref{fig:Semiring-SparseCorrelation}.  In the case of ${+}.{\times}$ semiring, the $\oplus$ operation $+$ aggregates values from all the edges between two vertices. Additional positive edges will increase the overall weight in the adjacency array.  In the other pairs of operations, the $\oplus$ operator is either $\max$ or $\min$, which effectively selects only one edge weight to use for assigning the overall weight.  Additional edges will only impact the edge weight in the adjacency array if the new edge is an appropriate maximum or minimum value.  Thus, ${+}.{\times}$ constructs adjacency arrays that aggregate all the edges.  The sother emirings construct adjacency arrays that select extremal edges.  Each can be useful for construction graph adjacency arrays in appropriate context.

\begin{figure}[htb]
  \centering
    \includegraphics[width=3in]{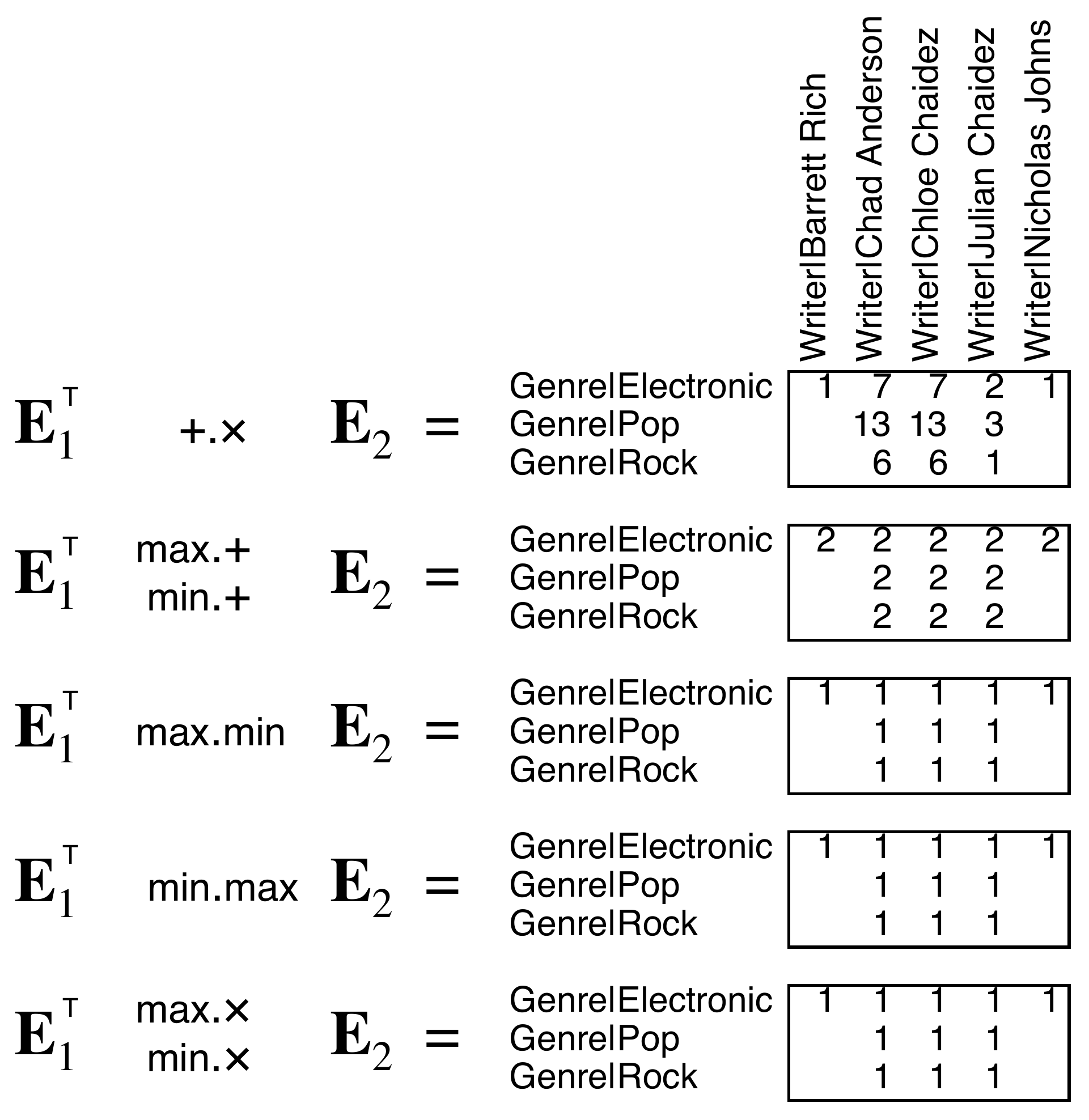}
      \caption{Creating a graph of music writers related to music genres can be computed by multiplying $\mathbf{E}_1$ and $\mathbf{E}_2$ as defined in Figure~\ref{fig:Music-IncidenceArrays}.  This correlation is performed using the  transpose operation ${\sf ^T}$ and the array multiplication operation ${\oplus}.{\otimes}$.  The resulting associative array has row keys taken from the column keys of $\mathbf{E}_1$  and column keys taken from the column keys of $\mathbf{E}_2$.  The values represent the weights on the edges between the vertices of the graph. Different pairs of operations $\oplus$ and $\otimes$ produce different results. For display convenience, operator pairs that produce the same values \underline{\smash{in this specific example}} are stacked.
}
      \label{fig:Semiring-SparseCorrelation}
\end{figure}

The impact of different semirings on the graph adjacency array weights are more pronounced if the values of $\mathbf{E}_1$ and $\mathbf{E}_2$ are more diverse.  Figure~\ref{fig:Music-IncidenceArrays123} modifies $\mathbf{E}_1$ so that a value of 2 is given to the non-zero values in the column {\sf Genre$|$Pop} and a values of 3 is given to the non-zero values in the column {\sf Genre$|$Rock}.

\begin{figure}[htb]
  \centering
    \includegraphics[width=3in]{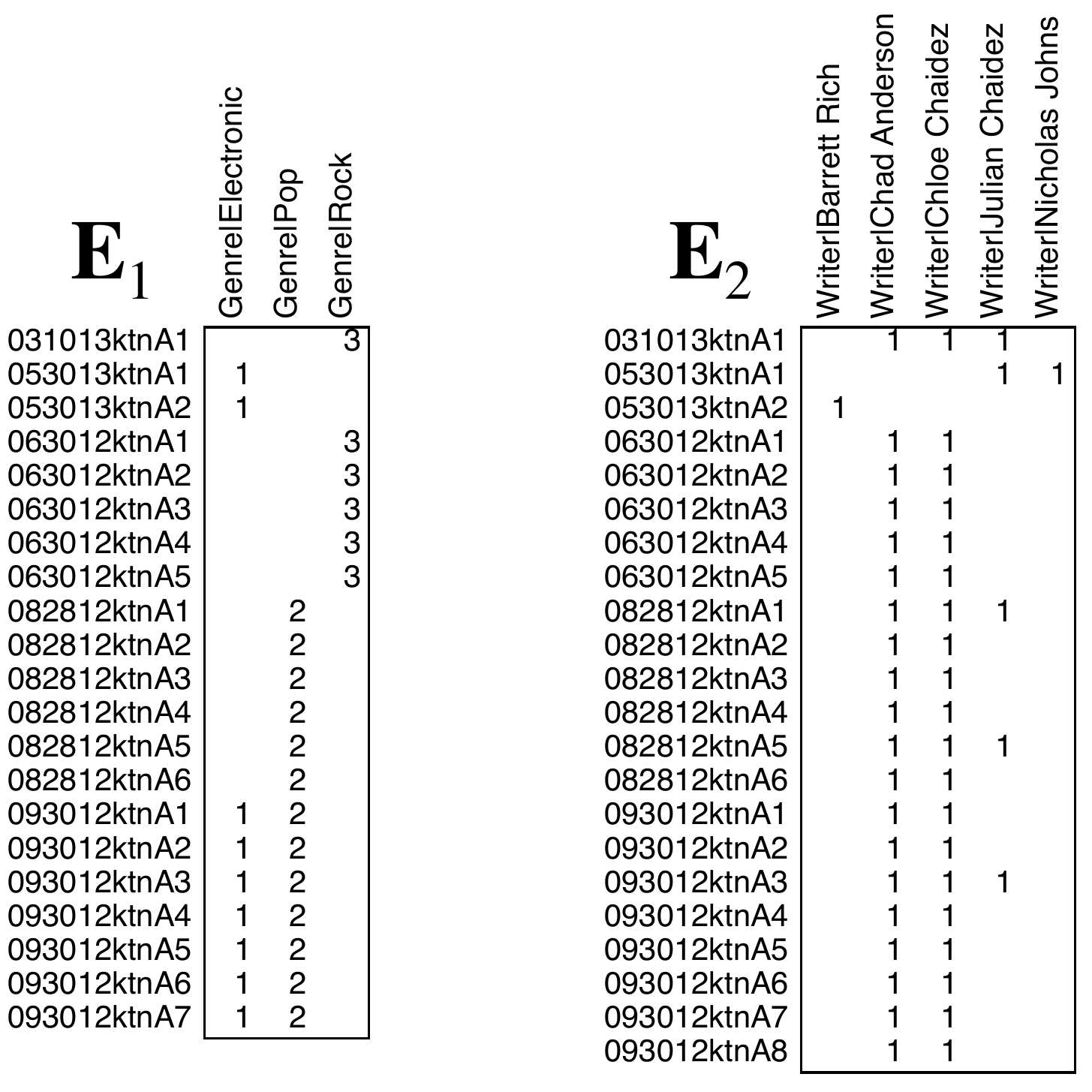}
      \caption{Incidence arrays from Figure~\ref{fig:Music-IncidenceArrays} modified so that the non-zero values of $\mathbf{E}_1$ take on the values 1, 2, and 3.}
      \label{fig:Music-IncidenceArrays123}
\end{figure}

Figure~\ref{fig:Semiring-SparseCorrelation123} shows the results of constructing adjacency arrays with $\mathbf{E}_1$ and $\mathbf{E}_2$ using different semirings.  The impact of changing the values in  $\mathbf{E}_1$ can be seen by comparing Figure~\ref{fig:Semiring-SparseCorrelation} with Figure~\ref{fig:Semiring-SparseCorrelation123}.  For the ${+}.{\times}$ semiring, the values in the adjacency array rows {\sf Genre$|$Pop} and {\sf Genre$|$Rock} are multiplied by 2 and 3.  The increased adjacency array values for these rows are a result of the $\otimes$ operator being arithmetic multiplication $\times$ so that
\begin{eqnarray*}
  2 \otimes 1 = 2 \times 1 = 2 \\
  3 \otimes 1 = 3 \times 1 = 3
\end{eqnarray*}
For the ${\max}.{+}$ and ${\min}.{+}$ semirings, the values in the adjacency array rows {\sf Genre$|$Pop} and {\sf Genre$|$Rock}  are larger by and 1 and 2.  The larger values in the adjacency array of these rows is due to the $\otimes$ operator being arithmetic addition $+$ resulting in
\begin{eqnarray*}
  2 \otimes 1 = 2 + 1 = 3 \\
  3 \otimes 1 = 3 + 1 = 4
\end{eqnarray*}
For the ${\max}.{\min}$ semiring, Figure~\ref{fig:Semiring-SparseCorrelation} and Figure~\ref{fig:Semiring-SparseCorrelation123} have the same adjacency array because $\mathbf{E}_2$ is unchanged. The $\otimes$ operator corresponding to the minimum value function continues to select the smaller non-zero values from $\mathbf{E}_2$
\begin{eqnarray*}
  2 \otimes 1 = \min(2,1) = 1 \\
  3 \otimes 1 = \min(3,1) = 1
\end{eqnarray*}
In contrast, for the ${\min}.{\max}$ semiring, the values in the adjacency array rows {\sf Genre$|$Pop} and {\sf Genre$|$Rock} are larger by and 1 and 2.  The increase in adjacency array values for these rows are a result of the $\otimes$ operator selecting the larger non-zero values from $\mathbf{E}_1$
\begin{eqnarray*}
  2 \otimes 1 = \max(2,1) = 2 \\
  3 \otimes 1 = \max(3,1) = 3
\end{eqnarray*}
Finally, for the ${\max}.{\times}$ and ${\min}.{\times}$ semirings, the values in the adjacency array rows {\sf Genre$|$Pop} and {\sf Genre$|$Rock} are increased by and 1 and 2.  Similar to the ${+}.{\times}$ semiring, the larger adjacency array values for these rows are a result of the $\otimes$ operator being arithmetic multiplication $\times$ resulting in
\begin{eqnarray*}
  2 \otimes 1 = 2 \times 1 = 2 \\
  3 \otimes 1 = 3 \times 1 = 3
\end{eqnarray*}

\begin{figure}[htb]
  \centering
    \includegraphics[width=3in]{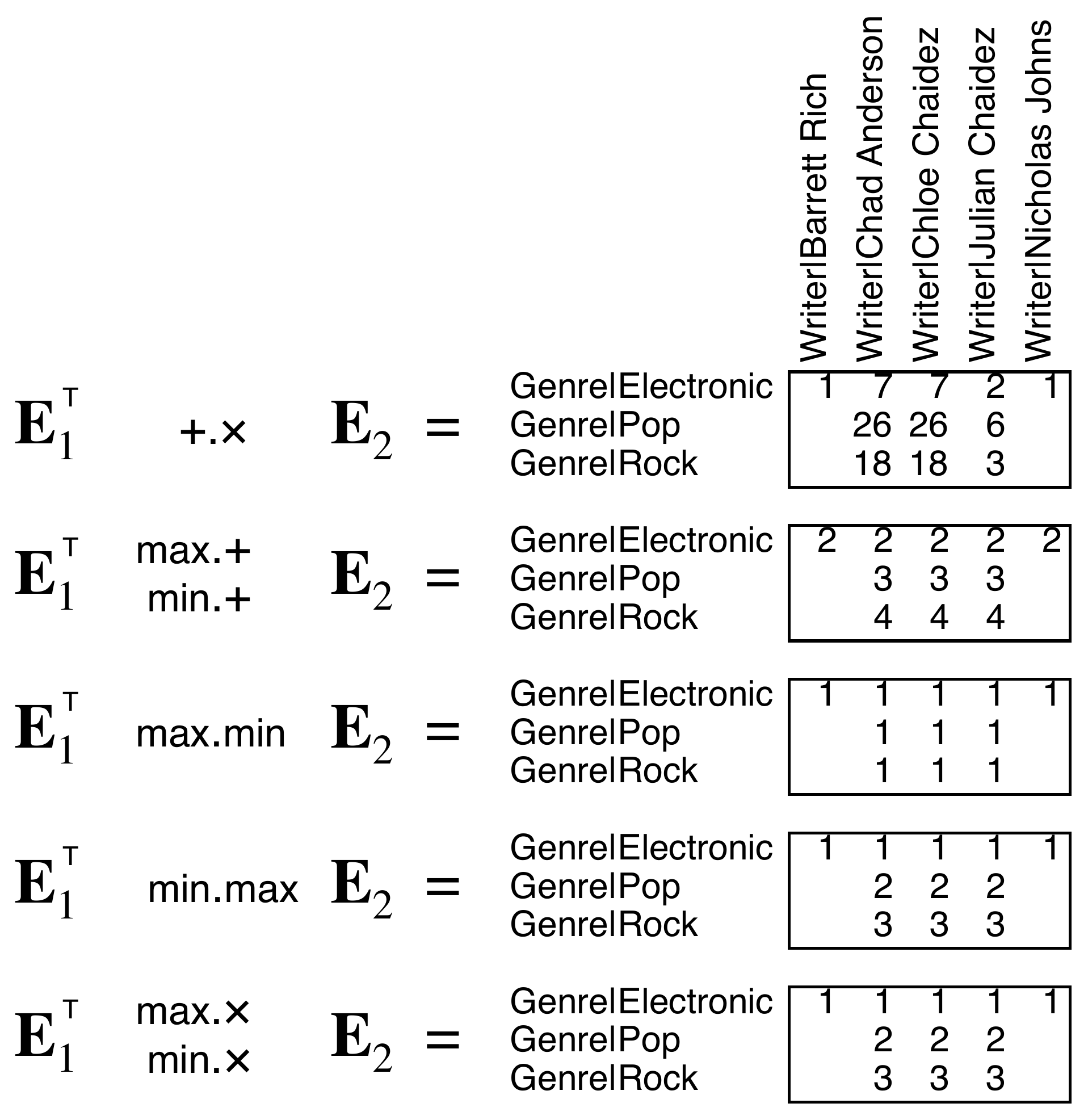}
      \caption{Building a graph of music writers connected with the music genres can be accomplished by multiplying $\mathbf{E}_1$ and $\mathbf{E}_2$ as defined in Figure~\ref{fig:Music-IncidenceArrays123}.  The correlation is computed with the transpose operation ${\sf ^T}$ and array multiplication ${\oplus}.{\otimes}$.  The resulting associative array has row keys taken from the column keys of $\mathbf{E}_1$  and column keys taken from the column keys of $\mathbf{E}_2$.  The values represent the weights on the edges between the vertices of the graph. Different pairs of operations $\oplus$ and $\otimes$ produce different results. For display convenience, operator pairs that produce the same values \underline{\smash{in this specific example}} are stacked.
}
      \label{fig:Semiring-SparseCorrelation123}
\end{figure}

Figures~\ref{fig:Semiring-SparseCorrelation} and \ref{fig:Semiring-SparseCorrelation123} show that a wide range of graph adjacency arrays can be constructed via array multiplication of incidence arrays over different semirings.  A synopsis of the graph constructions illustrated in Figures~\ref{fig:Semiring-SparseCorrelation} and \ref{fig:Semiring-SparseCorrelation123} is as follows
\begin{description}
\item[${+}.{\times}$] sum of products of edge weights connecting two vertices; computes the strength of all connections between two connected vertices.
\item[${\max}.{\times}$] \quad maximum of products edge weights connecting two vertices; selects the edge with largest weighted product of all the edges connecting two vertices.
\item[${\min}.{\times}$] \quad minimum of products edge weights connecting two vertices; selects the edge with smallest weighted product of all the edges connecting two vertices.
\item[${\max}.{+}$] \quad maximum of sum of edge weights connecting two vertices; selects the edge with largest weighted sum of all the edges connecting two vertices.
\item[${\min}.{+}$] \quad minimum of sum of edge weights connecting two vertices; selects the edge with smallest weighted sum of all the edges connecting two vertices.
\item[${\max}.{\min}$] \quad \quad maximum of the minimum of weights connecting two vertices; selects the largest of all the shortest connections between two vertices.
\item[${\min}.{\max}$] \quad \quad minimum of the maximum of weights connecting two vertices; selects the smallest of all the largest connections between two vertices.
\end{description}

\section{Conclusion}

Graph construction, a fundamental operation in a data processing pipeline, is typically done by multiplying the incidence array representations of a graph, $\mathbf{E}_\mathrm{in}$ and $\mathbf{E}_\mathrm{out}$, to produce an adjacency array of the graph, $\mathbf{A}$.  The mathematical criteria to determine if $\mathbf{A}$ will have the required structure of the adjacency array of the graph over are as follows.  Let $V$ be a set with closed binary operations $\oplus,\otimes$ with identities $0,1\in V$.  Then the following are equivalent:
\begin{enumerate}
\item $\oplus$ and $\otimes$ satisfy the properties
	\begin{enumerate}[(a)]
		\item Zero-Sum-Free: $a\oplus b=0$ if and only if $a=b=0$,
		\item No Zero Divisors: $a\otimes b = 0$ if and only if $a=0$ or $b=0$, and 
		\item $0$ is Annihilator for $\otimes$: $a\otimes 0 = 0\otimes a=0$.
	\end{enumerate}
\item If $G$ is a graph with out-vertex and in-vertex incidence arrays $\mathbf{E}_\mathrm{out}:K{\times} K_\mathrm{out} \rightarrow V$ and $\mathbf{E}_\mathrm{in}: K{\times} K_\mathrm{out} \rightarrow V$, then $\mathbf{E}_\mathrm{out}^{\sf T}\mathbf{E}_\mathrm{in}$ is an adjacency array for $G$.
\end{enumerate}
The values in the resulting adjacency array are determined by the corresponding addition $\oplus$ and multiplication $\otimes$ operations used to perform the array multiplication.


\section*{Acknowledgment}

The authors would like to thank Paul Burkhardt, Alan Edelman, Sterling Foster, Vijay Gadepally, Sam Madden, Dave Martinez, Tom Mattson, Albert Reuther, Victor Roytburd, and Michael Stonebraker.



%

\end{document}